
\documentclass[12pt]{article}

\usepackage{epsfig}

\usepackage{amssymb}
\usepackage{amsfonts}

\usepackage{color}

 
%
     \setlength{\headheight}{-10pt}

  \setlength{\headsep}{-10pt}

  \setlength{\textwidth}{460pt}



  %
  %
     \setlength{\textheight}{660pt}

\setlength{\oddsidemargin}{10pt}

\def\be{\begin{equation}}
\def\ee{\end{equation}}
\def\ba{\begin{array}{c}}
\def\ea{\end{array}}

\def\ben{$$}
\def\een{$$}

\newcommand{\bea}{\begin{eqnarray}}
\newcommand{\eea}{\end{eqnarray}}

\newtheorem{thm}{Theorem}

\newtheorem{lemma}[thm]{Lemma}

\newenvironment{proof}{\noindent {\bf Proof}}{\hfill$\square$\vspace{3mm}\endtrivlist}

\newcommand{\bba}[1]{\begin{eqnarray}}
\newcommand{\eba}{\end{eqnarray}}

\begin{document}



 \begin{center}{\Large \bf

Resonances and continued-fraction Green's functions
in non-Hermitian
Bose-Hubbard-like quantum models

  }\end{center}

 \begin{center}

\vspace{8mm}

  {\bf Miloslav Znojil} $^{1,2}$

\end{center}

\vspace{8mm}

  $^{1}$
 {The Czech Academy of Sciences,
 Nuclear Physics Institute,
 Hlavn\'{\i} 130,
250 68 \v{R}e\v{z}, Czech Republic, {e-mail: znojil@ujf.cas.cz}}


 $^{2}$
 {Department of Physics, Faculty of
Science, University of Hradec Kr\'{a}lov\'{e}, Rokitansk\'{e}ho 62,
50003 Hradec Kr\'{a}lov\'{e},
 Czech Republic}


\section*{Abstract}

With resonances treated as eigenstates of
a non-Hermitian quantum Hamiltonian $H$,
the task of
localization of the complex energy eigenvalues of $H$
is considered.
The paper
is devoted to its
reduced version in which one only computes
the real quantities
called singular values
of $H$.
It is shown that in such an approach
(and under suitable constraints
including the tridiagonality of $H$)
the singular values can be
sought as poles of a ``Hermitized'' Green's function
expressible in terms of a doublet of matrix continued fractions.
A family of
multi-bosonic
Bose-Hubbard-like complex Hamiltonians
is recalled for illustration purposes.

\subsection*{Keywords}.

non-Hermitian complex tridiagonal Hamiltonians;

resonances in generalized Bose-Hubbard quantum systems;

singular values specified via an auxiliary
Schr\"{o}dinger-like equation;

analytic and matrix continued-fraction Green's functions;

\newpage

\section{Introduction}

Non-Hermitian quantum Hamiltonians $H \neq H^\dagger$
can generate the real or complex energy
spectra of bound or resonant
states $\{E_n\}$ \cite{Geyer,BB}. One of
the most remarkable many-body examples
of such a family of models
has been proposed and studied
by Graefe et al \cite{Uwe}. They
started their considerations from one of the standard
(viz., self-adjoint)  versions
of
the
three-parametric many-body
Bose-Hubbard Hamiltonian
\begin{equation} \label{Ham1}
 H^{(BH)}(\varepsilon,v,c) = \varepsilon\left(a_1^{\dagger}a_1
 - a_2^{\dagger}a_2\right) +
  v\left(a_1^{\dagger}a_2 + a_2^{\dagger}a_1\right) + \frac{c}{2}
  \left( a_1^{\dagger}a_1 - a_2^{\dagger}a_2\right)^2
\end{equation}
written in terms of the annihilation and
creation operators
$a$ and $a^\dagger$. They replaced
it by its analytically continued non-self-adjoint
avatar $H^{(BH)}({\rm i}\gamma,v,c)$
for which the spectrum can remain real
and compatible with the unitarity in a certain
subdomain of parameters, for
$(\gamma,v,c) \in {\cal D}^{(0)}$.

The
\textcolor{black}{above-cited}
proponents of their analytically continued model pointed out that
\textcolor{black}{during the
related experiment-oriented discussions
one has to distinguish, carefully, between
the phenomenological role
and observability aspects of the real and complex energies
or other physical observables. In the former case, indeed,
the bound-state interpretation is standard. In contrast,
an appropriate consistent interpretation of the complex energies
or other physical observables
remains model-dependent and
must be formulated case-by-case.}

\textcolor{black}{Typically,
the most common notion of a bound state
is being replaced by the concept of a resonance
(cf., e.g., \cite{Nimrod}).}
Out of the unitarity-supporting domain (i.e.,
\textcolor{black}{
in model (\ref{Ham1}),}
for $(\gamma,v,c) \notin \overline{{\cal D}^{(0)}}$
where the bar denotes the closure of the domain),
a consistent
physical interpretation of the system
\textcolor{black}{
is found in its open-system reinterpretation.}
In place of a unitary, closed quantum system
(called, sometimes, quasi-Hermitian \cite{Geyer}),
one has to
start speaking about an open
system
\textcolor{black}{defined as}
admitting the
loss of unitarity, say, due to the
influence of environment \cite{Nimrod}.

The unstable
eigenstates of any Hamiltonian with $E_n \notin \mathbb{R}$
have to be
reinterpreted as resonances, with their description
becoming a challenge.
The constructive
localization of the
complex energies
ceases to be easy.
In what follows we will address a
simplified task, therefore.
In a way inspired
by several recent studies (cf., e.g., \cite{PS1,MCF}) we
will turn attention to
the mere evaluation
of the so called singular
values $\sigma_n$ of $H$ which are real \cite{SV}.

In particular, we will
study the
possibility of a facilitated determination of singular values
via Green's functions.
For technical reasons, our attention will be further restricted just
to the Hamiltonians having a Jacobi-matrix form, finite or infinite.
Under this constraint,
the presentation of our results will start
in section \ref{osinghovi}.
For an overall preliminary illustration of applicability
of the specific tridiagonal Hamiltonians
in
quantum physics we
will recall there the most elementary discrete Schr\"{o}dinger equation
(a few related comments will be added in Appendix A below)
plus a slightly more sophisticated
anharmonic-oscillator example
of Ref.~\cite{Singh}.

We will remind the readers that
the popularity of the latter example
originated from its methodical relevance in
quantum field theory.
In a marginal addendum we will also point out that
the specific choice of an anharmonic-oscillator
example
helped to discover the existence of
the so called quasi-exactly solvable quantum
bound-state models \cite{Ushveridze}.

One of the main results
of Ref.~\cite{Singh} was that
for the particular
oscillator in question, the (real and discrete)
bound state energies $E_n$
of the system
could be defined as
poles of an {\it ad hoc\,} Green's function ${\cal G}(z)$.
This function was
obtained in the form of a convergent analytic continued fraction.
Here, such a construction will be used as a methodical guide.
We will briefly explain the
derivation of the formula, and we
will point out that a few similar
results will also form the main
contents of our present paper.

In section \ref{ufoun} we will emphasize that
our considerations were mainly inspired
by the phenomenological appeal of
the above-mentioned Bose-Hubbard
model $H^{(BH)}({\rm i}\gamma,v,c)$  (cf. also Appendix B below)
and of its
generalizations (cf., e.g., \cite{zobecUwe}).
First of all, we will show that the
specific formal features of all of these
Bose-Hubbard-like
models
lead to the necessity of an amendment of the
very method of the derivation
as well as of the resulting structure of the
corresponding analytic Green's function.

We will
reveal that such an amended Green's function
has to be defined in terms of {\em two\,}
continued fractions.
In this manner, some of
the limitations of
the single-continued-fraction
methods of Refs.~\cite{MCF} or \cite{Singh}
will successfully be removed.
In particular,
what will be explained and clarified
will be the reasons of
the failure of our recent repeated attempts
of
application of the single-continued-fraction
methods
to our present
broader family of
Bose-Hubbard-like systems comprising a larger number of particles~\cite{Gar}.

A formal climax of our present constructive
considerations will come in section~\ref{ultim}.
We will upgrade the idea of the
matrix-continued-fraction-based
reduction
of the
search for the energies of the resonances
to the search for
the mere singular values $\sigma_n$
of $H$. In a way inspired by Pushnitsky and \v{S}tampach \cite{PS1}
we will \textcolor{black}{introduce an auxiliary quasi-Hamiltonian
(with eigenvalues $\sigma_n$)}
and
we will block-tridiagonalize
\textcolor{black}{its matrix form}.
Ultimately, we will
end up with
the construction
of a related auxiliary Green's function
defined in terms of a pair of {\em matrix\,}
continued fractions.

A brief discussion will be then added in section \ref{cotos}
and the results will be reviewed in section \ref{summary}.

\section{Single-continued-fraction Green's functions\label{osinghovi}}

Methodical
appeal of
analytic representations of
resolvents and Green's functions ranges from
quantum chemistry to
quantum field theory.
A typical example
can be found in~\cite{Singh}.

\subsection{Example}


In {\it loc. cit.},
Singh,
Biswas and Datta
picked up
a particular (viz., sextic) quantum anharmonic oscillator,
and
they managed to
reformulate the related ordinary differential Schr\"{o}dinger equation
as a tridiagonal-matrix eigenvalue
problem
 \be
  \left( \begin {array}{ccccc}
     a_1&b_1&0
 &\ldots&0
   \\
   c_2&a_2&b_2&\ddots
 &\vdots
   \\
 0
 &\ddots&\ddots&\ddots&0
   \\
 \vdots&\ddots&c_{N-1}&a_{N-1}&b_{N-1}
    \\
  0&\ldots&0&c_{N}&a_{N}
    \\
 \end {array} \right)\, \left( \begin {array}{c}
 \psi_1\\
 \psi_2\\
 \vdots\\
 \psi_N\\
 \end {array} \right)
 =
 E
 \,\left( \begin {array}{c}
 \psi_1\\
 \psi_2\\
 \vdots\\
 \psi_N\\
 \end {array} \right)\,,
 \ \ \ \ N \to \infty\,
 \label{SEfinkit}
 \ee
characterized
by an anomalous
asymptotic asymmetry of
matrix elements,
 \be
 a_n={\cal O}(n)
 \,,\ \ \
 b_{n}=4n^2+{\cal O}(n)\,,
 \ \ \
 c_{n+1}={\cal O}(n)\,,\ \ \ \
 n \gg 1
 \,.
  \label{largeen}
  \ee
It was precisely
such a
feature of the model
which opened the way towards a transition
from the infinite algebraic set (\ref{SEfinkit})
to an
analytic
continued-fraction formula
 \be
 {\mathcal G}(z)=\frac{1}{
 a_1-z-\frac{b_1c_2}{a_2-z-\frac{b_2c_3}{a_3-z-\ldots}} }\,
 \label{Virendra}
 \ee
which defined its
Green's function.
A key to the success
lied in the rigorous
demonstration of the convergence of formula (\ref{Virendra}).
This, naturally, contributed also to the growth of popularity
of the analytic continued-fraction expansions
(cf., e.g., \cite{FP}).

Singh with coauthors recommended
formula (\ref{Virendra})
as an efficient means of
resummation
of
divergent Rayleigh-Schr\"{o}dinger
perturbation series.
The reduction of
problem~(\ref{SEfinkit})
to its
analytic Green's function (\ref{Virendra})
resulted from a lucky coincidence of several fortunate
circumstances but
some aspects of the
implementation of the recipe
were still intuitive, not
sufficiently rigorous and not always
leading to
the correct result
(cf., e.g., a critical comment in \cite{Hill}).

Even after the final resolution of the latter
problem by Hautot \cite{Hautot},
a few other
unpleasant obstacles
were encountered, e.g.,
when we tried to apply the idea
to
a fairly realistic
Bose-Hubbard
many-particle Hamiltonian  of paper \cite{Uwe}
(cf. Appendix A).
This motivated also our present paper.
We revealed that the
standard version of the continued fraction expansion of the resolvent
must be modified,
and we succeeded in finding its appropriate amendment.

The details will be described in what follows.
We will show that
in contrast to our initial expectations \cite{Hill},
the amendment of the approach
has not been related
to the rather anomalous
asymptotic asymmetry (\ref{largeen})
of the benchmark model of paper \cite{Singh}.

The occurrence of asymmetry has also been found inessential
in applications where
such a feature of Hamiltonian $H$ is only rarely encountered.
Most often,
the tridiagonality and symmetry
of the Hamiltonian matrix in
Schr\"{o}dinger Eq.~(\ref{SEfinkit})
is achieved when working
in the so called Lanczos' recurrently defined basis
in Hilbert space \cite{Lanczos,Nex}.


\subsection{Factorization method}


Let us now
consider Schr\"{o}dinger equations
of the form~(\ref{SEfinkit})
for which
the construction
of the continued-fraction Green's function (\ref{Virendra})
proved successful.
In the vast majority of models of such a type
the Hamiltonian is a matrix
which is well behaved in the limit $N \to \infty$ \cite{Haydock,GM}.
Then,
the convergence of the continued fraction (\ref{Virendra})
emerges as a consequence.

An explicit explanation and proof of the convergence
may be based on the
factorization
 \be
 H-E
 = {\cal U}\,
 {\cal F}\,
 {\cal L}\,
 \label{finkit}
 \ee
of Schr\"{o}dinger operator in which
the middle factor
${\cal F}$ is a diagonal matrix
with elements
 $$
 1/f_1,  1/f_2, \ldots,  1/f_N\,
 $$
%
%
%
and
in which
we require that
 \be
 \det  {\cal U}=
 \det  {\cal L}=1\,.
 \label{sestka}
 \ee
The two outer
factors may be chosen
two-diagonal,
 \be
 {\cal U}=
  \left[ \begin {array}{ccccc}
  1&-u_2&0
 &\ldots&0
   \\
     0&1&-u_3&\ddots
 &\vdots
   \\
 0
 &0&\ddots&\ddots&0
   \\
 \vdots&\ddots&\ddots&1&-u_N
    \\
  0&\ldots&0&0&1
    \\
 \end {array} \right]\,,
 \ \ \ \ \ \
 {\cal L}=
  \left[ \begin {array}{ccccc}
  1&0&0
 &\ldots&0
   \\
     -v_2&1&0
 &\ldots&0
   \\
   0&-v_3&\ddots&\ddots
 &\vdots
   \\
 \vdots&\ddots
 &\ddots&1&0
   \\
  0&\ldots&0&-v_{N}&1
    \\
 \end {array} \right]\,,
 \label{lowkit}
 \ee
After we specify
${u}_{k+1}=-b_kf_{k+1}$ and ${v}_j=-f_jc_j$,
this ansatz will simplify the
construction because it may convert Eq.~(\ref{finkit})
in identity, provided only
that we
evaluate the matrix elements of ${\cal F}$ via
recurrences
 \be
 f_k=\frac{1}{a_k-E-b_kf_{k+1}c_{k+1}}\,,\ \ \ \
 k=N, N-1,\ldots,2 ,1 
 \label{cf}
 \ee
with initial $f_{N+1}=0$.

In some models
(i.e., up to exceptions discussed in paragraph \ref{qesy} below)
it makes sense to
reconstruct also the
eigenstates. Schr\"{o}dinger equation
$$
(H-E_n)\,\overrightarrow{\psi^{}}=0
$$
is to be factorized,
 \be
 ({\cal U}\,
 {\cal F}\,
 {\cal L}\,)\,\overrightarrow{\psi^{}}=0\,.
 \label{1finkit}
 \ee
Both of the outer two factors are formally invertible,
and their
inversions can be
written in closed form,
 \ben
 {\cal U}^{-1}=
  \left[ \begin {array}{ccccc}
  1&{u}_2&{u}_2{u}_3
 &\ldots&{u}_2{u}_3\ldots{u}_N
   \\
     0&1&{u}_3&\ddots
 &\vdots
   \\
 0
 &0&\ddots&\ddots&{u}_{N-1}{u}_N
   \\
 \vdots&\ddots&\ddots&1&{u}_N
    \\
  0&\ldots&0&0&1
    \\
 \end {array} \right]\,,
 \ \ \
 {\cal L}^{-1}=
  \left[ \begin {array}{ccccc}
  1&0&0
 &\ldots&0
   \\
     {v}_2&1&0
 &\ldots&0
   \\
   {v}_3{v}_2&{v}_3&\ddots&\ddots
 &\vdots
   \\
 \vdots&\ddots
 &\ddots&1&0
   \\
   {v}_N \ldots {v}_3{v}_2&\ldots&{v}_N{v}_{N-1}&{v}_N&1
    \\
 \end {array} \right]\,.
 \een
In (\ref{1finkit}), the leftmost factor ${\cal U}$ can be omitted
as inessential, yielding Schr\"{o}dinger equation
 \be
 {\cal F}\,
 {\cal L}\,\overrightarrow{\psi^{}}=0\,.
 \label{2finkit}
 \ee
In a sufficiently small vicinity of the eigenvalue,
also all of
the elements of ${\cal F}$ are well defined
by recurrences~(\ref{cf})
so that the generic
secular equation $\det (H-E)=0$
degenerates to its continued-fraction form
 \be
 \frac{1}{f_1(E_n)}=0\,.
 \label{generi}
 \ee
At $E=E_n$, as a consequence,
the first line of the simplified Schr\"{o}dinger Eq.~(\ref{2finkit})
becomes satisfied identically.
Opening the freedom
of the choice of an arbitrary optional
normalization constant ${\psi^{}}_1\neq 0$
and, ultimately, yielding formula
 \be
 \overrightarrow{\psi^{}}=
 \left( \begin {array}{c}
 \psi_1\\
 \psi_2\\
 \psi_3\\
 \vdots\\
 \end {array} \right)={\cal L}^{-1}\,
 \left( \begin {array}{c}
 \psi_1\\
 0\\
 0\\
 \vdots\\
 \end {array} \right)\,
 \label{jedenact}
 \ee
which renders the wave function
defined in explicit recurrent manner.

 \textcolor{black}{In some sense, the simplicity of
the wave functions
as defined by recurrences (\ref{jedenact})
is an immediate consequence of the tridiagonality
(i.e., simplicity) of the underlying Hamiltonian
as well as of the (more or less tacitly postulated)
simplicity of the structure of the underlying
physical Hilbert space of states.
In Appendix A the readers can find an example in which
the specification of such a structure
ceased to be trivial. Another,
much less exotic
sample of such a specification
is, in multiple applications,
based on a certain {\it ad hoc\,} modification
of the asymptotic boundary conditions.
Interested readers may find
two characteristic samples of impact
of such modifications
in papers \cite{JCSa,JCS}:
In the former case
the quantum dynamics
is controlled by the
less usual but still standard
periodic boundary conditions
while in the latter paper the ``anomalous'' Robin
boundary conditions have been used.
In our present context, it is important to add that
after a minor modification,
the continued fraction approach remained applicable
in both cases.}

\subsection{Terminating recurrences and exact partial solutions\label{qesy}}

\textcolor{black}{In the practical numerical applications
of the Hermitian or non-Hermitian Hamiltonians
(but especially in the latter case)
one
encounters multiple numerical challenges when computing the poles of
the Green's function.
For the purpose, people often utilize the Krylov space tools
(cf., e.g., paper \cite{pratik}).
Obviously, such an  alternative treatment of
tridiagonality
could certainly yield
various new and interesting insights.
Indeed, some of them were formulated
in the latter reference where its authors
studied physical non-Hermitian systems
(including the so called Ginibre ensembles and
a non-Hermitian version of the Sachdev-Ye-Kitaev model)
which appeared to be tridiagonalizable through the
standard singular-value
decomposition \cite{SV}.}

\textcolor{black}{
In this sense, the paper contributed to an extension of
our
present project, with one of
the additional reasons being the fact that the  continued fractions
were used there for the purposes of evaluation of another relevant
quantity called Lanczos coefficient.
At the same time,
the fact that the paper
addressed several real-world phenomena
helped us to
acknowledge the limitations of their
analysis in the mathematical
continued fraction framework.
One must be aware of its limitations.
It makes also sense to add that all of the studies
which combine the pragmatic and theoretical aspects
are well
motivated by the enhancement of our understanding of
the dynamics and, in the particular
case of Ref.~\cite{pratik}, of certain integrable forms of dynamics
mimicking the manifestations of chaos
as generated by
non-Hermitian random matrices. }

\textcolor{black}{Incidentally, the underlying
formal feature of integrability may deserve one more
comment on
the contents of above-cited paper \cite{Singh}.
In this paper, indeed,}
the coefficients entering formula (\ref{Virendra}) were quickly growing
functions of the subscript
at large $n\gg 1$ (cf. Eq.~(\ref{largeen})).
From the point of view of
the analytic theory of continued
fractions \cite{Wall}
such a feature represented an important
merit of the model
which simplified the proof of the convergence
of the Green's function (\ref{Virendra}).
Nevertheless, in a special
dynamical regime
the preparatory
process of the
tridiagonalization
of the Hamiltonian in (\ref{SEfinkit})
led to the degeneracy
 \be
 c_{K+1}= 0\,
 \label{Roma}
 \ee
i.e., to
an accidental disappearance of
one of the coefficients
at a fixed finite $K <  N$.
This made
the continued-fraction expansion terminated
so that the secular Eq.~(\ref{generi}) became algebraic.
Also the related recurrences (\ref{jedenact})
for wave functions terminated
so that there was no need to study the limit $N\to \infty$.

The latter anomaly
(currently well known under a nickname of ``quasi-solvability'')
attracted
a lot of attention in the literature
(cf., e.g., an
extensive review \cite{Ushveridze}).
Paradoxically, its discovery in \cite{Singh}
which was just a byproduct of the possibility of
the termination (\ref{Roma}) of the
generic analytic continued-fraction formula (\ref{Virendra})
seems to be forgotten at present.

\section{Bose-Hubbard-like complex Hamiltonians\label{ufoun}}


In multiple realistic applications of quantum mechanics
the tridiagonal-matrix Hamiltonians need not necessarily
have the structure as sampled by
Eqs.~(\ref{SEfinkit}) + (\ref{largeen}) above.
In a way
explained in Appendix B and guided by
the influential paper \cite{Uwe}
we will turn attention,
in what follows,
to another class of models based, for the reasons explained below,
on the use of a broad family of
Bose-Hubbard-like Hamiltonians.

%
%

\subsection{Quantum models
admitting phase transitions}

Among multiple papers dealing with the
tridiagonal-matrix quantum Hamiltonians
(cf., e.g., their compact review in \cite{Marcelo})
we felt particularly addressed by
paper \cite{Uwe} in which the authors studied
the occurrence of quantum phase transitions mediated by the
singularities
called ``exceptional points'' (EP, \cite{Kato})
{\it alias\,}
``non-Hermitian degeneracies''
(cf. also \cite{Heiss} and \cite{Berry}).
Attention has been paid to
the quantitative description of the EP unfolding
tractable as a phase transitions (cf. also \cite{passage}).

The model used in \cite{Uwe} was a
complexified but still
realistic version of the
quantum Bose-Hubbard manybody system
(see more details in Appendix B).
In our older paper on such a subject \cite{zobecUwe}
we demonstrated that
one of the characteristic features
of the model
(viz., the existence of the EPs
of higher orders)
does not if fact require the existence of all of its
symmetries.
The generalized models have been shown to
cover both the closed-system (i.e., unitary)
and open-system (i.e., non-unitary) quantum dynamical regimes.

In detail we constructed several Bose-Hubbard-like
Hamiltonians sharing the same
generic tridiagonal-matrix structure
 \be
 H^{[M,N]}=\left[ \begin {array}{ccccccc}
 a_{-M}&b_{{-MN}}&0&\ldots&0&0&0
 \\{}c_{{-M+1}}&a_{-M+1}&
 b_{{-M+1}}&0&\ldots&0&0
 \\{}
 0&\ddots&\ddots &\ddots&\ddots&&0
 \\{}\vdots&\ddots&c_{{0}}&a_{0}&b_{{0}}&\ddots&\vdots
 \\{}0&&\ddots&\ddots&\ddots&\ddots&0
 \\{}0&0&\ldots&0&c_{{N-1}}&a_{N-1} &b_{{N-1}}
 \\{}0&0&0&\ldots&0&c_{{N}}&a_{N}
 \end {array}
 \right]\,
 \label{cotri}
 \ee
and sharing also the analogous characteristic
behavior in the limit of the large matrix dimensions,
 \be
 \lim_{M\to \infty}|a_{-M}|= \infty\,\ \ \ \ {\rm and}\ \ \ \
 \lim_{N\to \infty}|a_{N}|= \infty\,.
 \label{preas}
 \ee
In our constructions we
gave up some of the hidden
Bose-Hubbard Lie-algebraic symmetries (cf. \cite{Uwe})
but
we still managed to guarantee the existence
of the
EP degeneracies.
We pointed out that
in the language of physics, every EP limit
may be interpreted, in principle at least,
as
a genuine quantum
phase transition
\cite{passage} {\it alias\,}
a genuine quantum catastrophe \cite{catast}.
At each of the EP singularities, indeed,
the Hamiltonians
 $$
 H^{(BH)}({\rm i}\gamma^{(EP)},v^{(EP)},c^{(EP)})
 $$
as well as all of their generalized analogues
cease to be diagonalizable,
i.e., they cease to be acceptable
as representations of an observable.

In our present indirect continuation
of the latter studies
we decided to turn attention
from the existence
and influence
of such anomalies
to the conventional
quantum-mechanical dynamical regime in which
the system is controlled by
the complex Hamiltonian (\ref{cotri})
which is acceptable as an energy operator
in quantum mechanics,
being fully compatible with the theoretical postulates
as well as with the
conventional
probabilistic interpretation
of experiments.

\subsection{Amended factorization ansatz\label{faan}}

The date of birth
of quantum mechanics
can be identified with the publication
of the Heisenberg's paper \cite{WH}
which is going to be 100 years old this year.
Quantum observables
were interpreted there as eigenvalues of
an appropriate matrix.
In this spirit it makes sense to
study, first of all, various special forms of matrices
which represent, first of all, the
energy.

%
%

%

In a way mentioned in preceding section \ref{osinghovi},
a key technical role
can be then played by
Green's functions
represented in the continued-fraction form.
Surprisingly enough,
such a technique has not been used
in the case of the ``doubly infinite matrix'' models
as sampled by Eq.~(\ref{cotri})
\textcolor{black}{(cf. also, in this respect,
the
phenomenological motivation and
study
of the ``doubly infinite matrices'' in
Ref.~\cite{[9]}).
We have to admit that for us, precisely the existence of the
``doubly infinite'' matrix representation of
the non-Hermitian Bose-Hubbard model of Eq.~(\ref{Ham1})
was one of the key justifications of our choice of the model
as a ``suitable benchamark''}.
Let us now describe its consistent
\textcolor{black}{continued-fraction treatment and}
construction.
Filling the gap.

Under our present assumption
(\ref{preas})
an appropriate choice of the
separate factor matrices in Eq.~(\ref{finkit})
would be as follows,
%
%
 \be
 {\cal U}=
  \left[ \begin {array}{cccc|c|cccc}
  1&0&0&\ldots&0&\ldots&\ldots&0&0
  \\c_{{-M+1}}{f_{{-M}}}&1&0&\ddots&\vdots&&&\ldots&0
  \\{}0&\ddots&\ddots&\ddots&0&&&&\vdots
 \\{}\vdots&\ddots&c_{{-1}}{f_{{-2}}}&1&0&0&\ldots&\ldots&0\\
 \hline
 {}0&\ldots&0&c_{{0}}{f_{-1}}&1&b_{{0}}{f_1}&0&\ldots&0\\
 \hline
 {}0&\ldots&\ldots&0&0&1&b_{{1}}{f_{{2}}}&\ddots&\vdots
 \\{}\vdots&&&&0&\ddots&\ddots&\ddots&0
 \\{}0&\ldots&&&\vdots&\ddots&0&1&b_{{N-1}}{f_{{N}}}
 \\{}0&0&\ldots&\ldots&0&\ldots&0&0&1
 \end {array} \right]\,,
 \label{levy}
 \ee
%
%
 \be
 {\cal L}=
  \left[ \begin {array}{cccc|c|cccc}
  1&{f_{{-M}}}b_{{-M}}&0&\ldots&0&\ldots&\ldots&0&0
  \\0&1&{f_{{-M+1}}}b_{{-M+1}}&\ddots&\vdots&&&\ldots&0
  \\{}0&\ddots&\ddots&\ddots&0&&&&\vdots
 \\{}\vdots&\ddots&0&1&{f_{{-1}}}b_{{-1}}&0&\ldots&\ldots&0
 \\
 \hline
 {}0&\ldots&0&0&1&0&0&\ldots&0
 \\
 \hline
 {}0&\ldots&\ldots&0&{f_{{1}}}c_{{1}}&1&0&\ddots&\vdots
 \\{}\vdots&&&&0&\ddots&\ddots&\ddots&0
 \\{}0&\ldots&&&\vdots&\ddots&{f_{{N-1}}}c_{{N-1}}&1&0
 \\{}0&0&\ldots&\ldots&0&\ldots&0&{f_{{N}}}c_{{N}}&1
 \end {array} \right]\,.
 \label{pravy}
 \ee
Thus, the
bound-state
or resonant energies $E_n$
have to be sought
as roots of secular
equation
 \be
 \det ({H}-z)=0
 \,
 \label{secularum}
 \ee
i.e., under the overall regularity
assumptions as accepted in section \ref{osinghovi},
as poles of the Green's function.


Once we accept the shift of attention
to the quantum dynamical regime
which is far from EP singularities,
the behavior of the system
becomes better accessible to its experimental
non-quantum simulations, say, in the framework of
classical optics \cite{Christodoulides}.
Secular Eq.~(\ref{secularum})
yields then all of the real or complex
energy roots $z=E_n$
and, in general, it degenerates to the formally simpler rule
 \be
 \det {\cal F}(z)=0
 \,.
 \label{fsecularum}
 \ee
In full analogy with the
construction of section \ref{osinghovi},
the diagonal matrix ${\cal F}$
with elements
 $$
 1/f_{-M}, 1/f_{-M+1}, \ldots, 1/f_{N-1}, 1/f_N\,
 $$
has to be defined by recurrences
 \be
 f_{-j}=\frac{1}{a_{-j}-E-c_{-j}f_{-j-1}b_{-j-1}}\,,\ \ \ \
 j=M, M-1,\ldots,2 ,1
 \label{ucf}
 \ee
(with, formally, vanishing initial $f_{-M-1}=0$) and
 \be
 f_k=\frac{1}{a_k-E-b_kf_{k+1}c_{k+1}}\,,\ \ \ \
 k=N, N-1,\ldots,2 ,1
 \label{lcf}
 \ee
(where we use $f_{N+1}=0$).
We will also assume
that at a generic value of variable $z$ or $E$,
these recurrences remain well defined, i.e.,
that the denominators do not
vanish,
 \be
 1/f_{-M} \neq 0\,,
 \ \ \
 1/f_{-M+1} \neq 0\,,
 \ \ldots\  {\rm and} \ \ \ 1/f_{N} \neq 0\,,
 \ \ \ 1/f_{N-1} \neq 0\,,
 \ \ldots\,.
 \label{regu}
 \ee
Under this assumption, what differs
from the procedure of section~\ref{osinghovi}
is the necessity of the matching.

\begin{lemma}. \label{lemmaone}
In the regular, non-terminating cases,
the Green's function $f_0(z)$
of matrix Hamiltonians (\ref{cotri}) with any $M \leq \infty$ and $N \leq \infty$
and with an asymptotic growth property (\ref{preas})
can be defined,
at a suitable real or complex $E=z$,
in terms of a pair of
continued fractions of Eqs.~(\ref{ucf}) and (\ref{lcf}),
viz., by formula
 \be
 f_0(z)=\frac{1}{a_{0}-z-c_{0}f_{-1}(z)b_{-1}-b_0f_{1}(z)
 c_{1}}\,.
 \label{umacf}
 \ee
\end{lemma}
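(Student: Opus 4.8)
The plan is to take the factorization ansatz (\ref{finkit}), $H-z={\cal U}\,{\cal F}\,{\cal L}$, with the two outer bidiagonal wings fixed as in (\ref{levy}) and (\ref{pravy}) and with ${\cal F}$ diagonal carrying the entries $1/f_k$, and simply to verify that it is an \emph{identity}, computing the product entry by entry. Under the regularity assumption (\ref{regu}) every $1/f_k$ is nonzero, so ${\cal F}$, ${\cal U}$ and ${\cal L}$ are all well defined and ${\cal U},{\cal L}$ are unimodular; the recurrences (\ref{ucf}) and (\ref{lcf}) then fix the remaining parameters, and formula (\ref{umacf}) will emerge as the single matching condition that the ansatz leaves undetermined.

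First I would check the cheap entries. Because ${\cal U}$ connects row $i$ only to columns $i-1,i,i+1$ and ${\cal L}$ connects column $j$ only to rows $j-1,j,j+1$, any path $i\to k\to j$ in the product ${\cal U}{\cal F}{\cal L}$ reaches $|i-j|\le 2$; the two entries with $|i-j|=2$ vanish because the relevant wing of ${\cal U}$ (say the superdiagonal, present only for $i\ge 0$) and the relevant wing of ${\cal L}$ (its superdiagonal, present only for row index $\le -1$) never overlap, so the product is automatically tridiagonal. The sub- and super-diagonals reproduce $c_i$ and $b_i$ directly once the $1/f_k$ and $f_k$ factors cancel, while the off-central diagonal entries collapse to $1/f_i+c_if_{i-1}b_{i-1}$ for $i<0$ and to $1/f_i+b_if_{i+1}c_{i+1}$ for $i>0$, which equal $a_i-z$ precisely by the defining recurrences (\ref{ucf}) and (\ref{lcf}).

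The crux is the central row $i=0$. This is the one index at which \emph{both} wings are active: ${\cal U}$ carries the subdiagonal entry $c_0f_{-1}$ and the superdiagonal entry $b_0f_1$, while ${\cal L}$ supplies the incoming entries $f_{-1}b_{-1}$ and $f_1c_1$. Collecting the three surviving terms in the $(0,0)$ product gives $1/f_0+c_0f_{-1}b_{-1}+b_0f_1c_1$, and equating this to $a_0-z$ is exactly (\ref{umacf}). This double contribution, absent in the one-sided construction of section~\ref{osinghovi} where index $0$ would be an endpoint, is the structural reason the Green's function of the doubly infinite matrix (\ref{cotri}) must be glued from the \emph{two} continued fractions (\ref{ucf}) and (\ref{lcf}).

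It remains to justify that $f_0$ is genuinely the Green's function and that, for $M$ or $N$ infinite, the object is well defined. For the former I would note that row $0$ of ${\cal L}$ and column $0$ of ${\cal U}$ are both the basis vector $e_0$, hence the same holds for ${\cal L}^{-1}$ and ${\cal U}^{-1}$, so that $[(H-z)^{-1}]_{00}=e_0^{\top}{\cal L}^{-1}{\cal F}^{-1}{\cal U}^{-1}e_0=[{\cal F}^{-1}]_{00}=f_0$, and the poles of $f_0$ indeed solve (\ref{secularum}). The genuine obstacle is the last point: for $M,N\to\infty$ the two fractions are infinite and one must prove convergence. Here I would lean on the asymptotic growth (\ref{preas}), $|a_{-M}|,|a_N|\to\infty$, which forces the successive denominators in (\ref{ucf}) and (\ref{lcf}) to grow and the tails to contract, exactly as in the single-fraction convergence argument invoked for (\ref{Virendra}); this analytic step, rather than the purely algebraic matching, is where the real work lies.
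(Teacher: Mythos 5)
Your proposal is correct and follows essentially the same route as the paper: the factorization $H-z={\cal U}\,{\cal F}\,{\cal L}$ with the partitioned two-winged factors (\ref{levy})--(\ref{pravy}), the regularity assumption (\ref{regu}), and the recurrences (\ref{ucf})--(\ref{lcf}), with formula (\ref{umacf}) emerging as the matching condition at the central index. Your entry-by-entry verification and the identification $[(H-z)^{-1}]_{00}=f_0$ merely make explicit what the paper's terser proof (inversion via Eq.~(\ref{[9h]}) and reduction of the secular Eq.~(\ref{fsecularum}) to $1/f_0(z)\to 0$) leaves implicit.
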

\begin{proof}.
Our regularity assumption (\ref{regu})
makes the asymptotic part of ${\cal F}$
invertible. At a generic parameter not in spectrum,
$z\notin \{ E_n\}$, we may
invert Schr\"{o}dinger operator,
 \be
 \frac{1}
 {H-E}=
 {\cal L}^{-1}\,
 {\cal F}^{-1}\,
 {\cal U}^{-1}\,.
 \label{[9h]}
 \ee
At any eigenvalue $z=E_n$
we have to satisfy secular Eq.~(\ref{fsecularum})
so that the regularity  (\ref{regu}) can only
be required to hold for $1/f_{j}$ at $j \neq 0$.
In other words,
we may
introduce a dedicated symbol for $1/f_{0} \equiv 1/G$
and, having
omitted all of the non-vanishing factors $1/f_{j}$
from Eq.~(\ref{fsecularum})
we arrive at the relation
 \be
 \lim_{z \to E_n}\frac{1}{G(z)}=0\,
 \label{maxs}
 \ee
i.e., at our ultimate
secular equation.
\end{proof}

\subsection{Wave functions}

In comparison with the methodically motivated
matrix factors (\ref{lowkit})
and factorized Schr\"{o}dinger
Eq.~(\ref{1finkit}) of section \ref{osinghovi},
we now have to deal with the more sophisticated
partitioned
matrix factors (\ref{levy}) and (\ref{pravy}).
They are still easily invertible yielding, in particular, formula
 $$
 \left[ \begin {array}{cccc|c|cccc}
 %
 \ddots& \ddots&\ddots&\vdots&\vdots&\vdots&&&
 \\{}\ldots&-c_{{-2}}f_{{-3}}&1&0&0&0&\ldots&&
 \\{}\ldots&c_{{-2}}f_{{-3}}c_{{-1}}f_{{-2}}&-c_{{-1}}
 f_{{-2}}&1&0&0&0&\ldots&
 \\
 \hline
 \ldots&-c_{{-2}}f_{{-3}}c_{{-1}}f_{{-2}}c_{{0}}f_{{-1}}
 &c_{{-1}}f_{{-2}}c_{{0}}f_{{-1}}&-c_{{0}}f_{{-1}}&1&-b_{{0}}f_{{1}}
 &b_{{0}}f_{{1}}b_{{1}}f_{{2}}&-b_{{0}}f_{{1}}b_{{1}}f_{{2}}b_{{2}}f_{{3}}&\ldots
 \\
 \hline
 &\ldots&0&0&0&1&-b_{{1}}f_{{2}}
 &b_{{1}}f_{{2}}b_{{2}}f_{{3}}&\ldots
 \\{}&&\ldots&0&0&0&1&-b_{{2}}f_{{3}}&\ldots
 \\{}&&&\vdots&\vdots&\vdots&\ddots&\ddots&\ddots
 \end {array} \right]\,.
 $$
%
%
for
 $
 {\cal U}^{-1}
 $ as well as an analogous partitioned representation of ${\cal L}^{-1}$.

The invertibility of both of these matrices is essential
because in a full conceptual analogy with section \ref{osinghovi},
the factorized Schr\"{o}dinger equation
(cf. Eq.~(\ref{1finkit}))
can again be reduced to
its simplified isospectral form.
A modification of the construction only occurs
due to the necessity of
the replacement of the more elementary secular Eq.~(\ref{generi})
by its present, slightly more complicated analogue (\ref{maxs}).

It is also necessary to replace
the single set of recurrences (\ref{jedenact})
for wave functions
by its partitioned counterpart.
Thus, recalling Eq.~(\ref{2finkit})
in its present partitioned modification
 \be
 {\cal F}\,
 {\cal L}\,\overrightarrow{\psi^{}}=0\,
 \label{m2finkit}
 \ee
we have to omit its redundant middle line (i.e., the zeroth line)
due to its equivalence to secular Eq.~(\ref{fsecularum}).

This makes the rest of the diagonal matrix factor
${\cal F}$ in (\ref{m2finkit})
invertible so that we can drop it in (\ref{m2finkit}).
The role of the normalization constant
is now attributed to $\psi_0$. Relation (\ref{m2finkit})
becomes equivalent to the two sets of recurrences, viz.,
 \be
 \left[ \begin {array}{ccccc}
 -c_{{1}}f_{{1}}&1&0&0&\ldots
 \\{}c_{{2}}f_{{2}}c_{{1}}f_{{1}}&-c_{{2}}f_{{2}}&1&0&\ldots
 \\{}-c_{{3}}f_{{3}}c_{{2}}f_{{2}}c_{{1}}f_{{1}}
 &c_{{3}}f_{{3}}c_{{2}}f_{{2}}&-c_{{3}}f_{{3}}&1&\ddots\\
 \vdots&\ddots&\ddots&\ddots&\ddots
 \\
 \end {array} \right]\,
 \left( \begin {array}{c}
 \psi_{0}\\
 \psi_{1}\\
 \psi_{2}\\
 \vdots\\
 \end {array} \right)=0\,
 \label{3bfinkit}
 \ee
and
 \be
 \left[ \begin {array}{cccc}
 \ddots&\ddots&\ddots&\vdots
 \\ \ddots&-f_{{-3}}b_{{-3}}&f_{{-3}}b_{{-3}}f_{{-2}}b_{{-2}}
  &-f_{{-3}}b_{{-3}}f_{{-2}}b_{{-2}}f_{{-1}}b_{{-1}}
 \\{}\ddots&1&-f_{{-2}}b_{{-2}}&f_{{-2}}b_{{-2}}f_{{-1}}b_{{-1}}
 \\{}\ldots&0&1&-f_{{-1}}b_{{-1}}
 \\
 \end {array} \right]\,
 \left( \begin {array}{c}
 \vdots\\
 \psi_{-2}\\
 \psi_{-1}\\
 \psi_{0}\\
 \end {array} \right)=0\,,
 \label{3afinkit}
 \ee
with the former ones running downwards,
and with the latter set of
recurrences running upwards.
The same (though potentially $E_n-$dependent)
initial value of $\psi^{}_0=\psi^{(n)}_0\neq 0$ is shared
by both of these sets.

Whenever needed, an entirely analogous
pair of recurrences
could also be written to define the left eigenvectors
of $H$.
Due to the
generic non-Hermiticity of Hamiltonian matrix
$H \neq H^\dagger$,
such an additional construction might make
perfect sense, indeed \cite{twoSEs}.


\section{Hermitization and quasi-Hamiltonians\label{ultim}}

In preceding sections \ref{osinghovi} and \ref{ufoun}
the non-degenerate complex eigenenergies $E_n$ of
a tridiagonal complex Hamiltonian $H \neq H^\dagger$
were
treated as poles of a Green's function.
Now we are going to turn attention to the fact that
numerically, the localization of such poles is not easy.

\subsection{Singular values as eigenvalues of an auxiliary operator}

In a way which we learned from the recent mathematically oriented
preprint \cite{PS1} it makes sense to reduce
the study of resonances
supported by non-Hermitian
$H$ by the study of non-negative eigenvalues
$\sigma_n^2$ of the products $H^\dagger\,H$.
In the context of quantum physics, quantities $\sigma_n$
are called
singular values
of $H$ \cite{SV}.

In this framework the authors of Ref.~\cite{PS1}
have noticed that the singular values $\sigma_n$ themselves
can be interpreted as eigenvalues
of an auxiliary self-adjoint operator
 \be
  \widetilde{\mathbb H}=
  \left (\begin{array}{c|c}
  0&H\\
  \hline
  H^\dagger&0
  \end{array}
  \right )\,
  \label{desce}
  \ee
where
%
 \be
 H=
  \left[ \begin {array}{ccccc}
     a_1&b_1&0
 &0&\ldots
   \\
   c_2&a_2&b_2&0&\ddots
   \\
 0
 &c_3&a_3&b_3&\ddots
   \\
 \vdots&\ddots&\ddots&\ddots&\ddots
    \\
 \end {array} \right]\ \
 \neq
 H^\dagger =
  \left[ \begin {array}{ccccc}
     a_1^*&c_2^*&0
 &0&\ldots
   \\
   b_1^*&a_2^*&c_3^*&0&\ddots
   \\
 0
 &b_2^*&a_3^*&c_4^*&\ddots
   \\
 \vdots&\ddots&\ddots&\ddots&\ddots
    \\
 \end {array} \right]\,.
 \label{trufinkit}
 \ee
In \cite{PS1}, incidentally,
Hamiltonians (\ref{trufinkit})
were required
complex-symmetric and
bounded, with property
\begin{equation}
\sup_{j}(|{a_j}|+|{b_j}|+|{c_{j+1}}|)<\infty\,.
\label{a1}
\end{equation}
In our present context,
via an appropriate re-indexing,
matrices (\ref{trufinkit})
can be, and have to be,
transformed
into their
Bose-Hubbard-like
versions $H=H^{[M,N]}$ of Eq.~(\ref{cotri}).
%

Needless to add that in our paper the operator-boundedness
constraint (\ref{a1})
is violated. This violation may concern either
the positive subscripts $j\gg 1$
(cf., typically, Eq.~(\ref{largeen}) in section \ref{osinghovi})
or both the positive
and
negative
large
subscripts (i.e., $|j|\gg 1$,
cf. Eq.~(\ref{preas})
and also Eq.~(\ref{reconfiteor}) in Appendix A below).
Our reasons of such a change of perspective
are phenomenological since our
Schr\"{o}dinger equations
have to describe non-degenerate sets of bound states
or resonances.
Marginally, it is possible to add that
one can work with non-Hermitian Hamiltonians
in both of these dynamical scenarios and regimes
\cite{Geyer}.

\subsection{Block-tridiagonalization transformation}

After the loss of the
reality of the spectrum
it makes sense to keep
the numerical tasks simple.
Thus, Pushnitzky
with \v{S}tampach \cite{PS1}
introduced the auxiliary
operator defined by
matrix (\ref{desce}).
They emphasized that such an operator
is self-adjoint and that it
yields the necessary non-negative singular values of $H$
simply as a subset of
non-negative
eigenvalues of $\widetilde{\mathbb H}$.

Although
these authors
only worked with a rather restricted
class of complex symmetric
matrices $H$,
their idea
works also in the present more general setting.
Having in mind the initial indexing (\ref{trufinkit}),
one simply introduces
a permutation of the basis
specified by the following partitioned auxiliary matrix
 \be
 \mathbb{V}=
  \left[ \begin {array}{cccccccc}
    1&0&0&0&\ldots&&&
   \\
    0&0&1&0&0&\ldots&&
   \\
    0&0&0&0&1&0&\ldots&
   \\
    \vdots&\ddots&&&\ddots&\ddots&\ddots&
   \\
   \hline
    0& 1&0&0&0&\ldots&&
   \\
    0&0&0&1&0&0&\ldots&
   \\
   0& 0&0&0&0&1&0&\ldots
   \\
   \vdots& \ddots&&&&\ddots&\ddots&\ddots
 \end {array} \right]\,
 \label{triceps}
  \ee
which is invertible. It is then easy to verify that
the permutation
given by the intertwining formula
 \be
 \widetilde{\mathbb H}\,\mathbb{V}=\mathbb{V}\,\mathbb{H}
 \ee
converts
the above-mentioned tilded
matrix $\widetilde{\mathbb H}$
into its
user-friendlier untilded
isospectral partner ${\mathbb H}$
which is block-tridiagonal,
 \be
 {\mathbb H}=
  \left[ \begin {array}{cccc}
     A_1&B_1&0
 &\ldots
   \\
   C_2&A_2&B_2&\ddots
   \\
 0
 &C_3&A_3&\ddots
   \\
 \vdots&\ddots&\ddots&\ddots
    \\
 \end {array} \right]\,.
 \label{matrix}
 \ee
At all subscripts $k$, moreover,
the separate two-by-two submatrices of ${\mathbb H}$, viz.,
 \be
 A_k=
\left (
\begin{array}{cc}
0&a_k\\
a_k^*&0
\ea
\right )\,,\ \ \ \ \
B_k=
\left (
\begin{array}{cc}
0&b_k\\
c_{k+1}^*&0
\ea
\right )\,,\ \ \ \ \
C_{k+1}=B_k^\dagger=
\left (
\begin{array}{cc}
0&c_{k+1}\\
b_k^*&0
\ea
\right )\,
 \label{[14]}
 \ee
are all sparse.

\subsection{Matrix continued fractions}

The block-tridiagonality of matrix (\ref{matrix})
finds its analogue in the ordinary tridiagonality of
Hamiltonian matrix $H$ in
Schr\"{o}dinger Eq.~(\ref{SEfinkit})
of section \ref{osinghovi}.
In this spirit the factorization
of the ordinary tridiagonal
Schr\"{o}dinger operator $H-E$ as given by Eq.~(\ref{finkit})
may find its immediate analogue
in relation
 \be
 \mathbb{H}-\sigma
 = {\mathbb U}\,
 {\mathbb F}\,
 {\mathbb L}\,\,
 \label{alfinkit}
 \ee
in which
the middle factor
${\mathbb F}$ would be a block-diagonal matrix
with two-by-two-matrix diagonal elements
 $
 1/F_j  $.
The other two large-matrix factors
 \be
 {\mathbb U}=
  \left[ \begin {array}{cccc}
  I&-U_2&0
 &\ldots
   \\
     0&I&-U_3&\ddots
   \\
 0
 &0&I&\ddots
   \\
 \vdots&\ddots&\ddots&\ddots
    \\
 \end {array} \right]\,,
 \ \ \ \ \ \
 {\mathbb L}=
  \left[ \begin {array}{cccc}
  I&0&0
 &\ldots
   \\
     -V_2&I&0
 &\ldots
   \\
   0&-V_3&I&\ddots
   \\
 \vdots&\ddots
 &\ddots&\ddots
   \\
 \end {array} \right]\,
 \label{blowkit}
 \ee
would be the block-bidiagonal
analogues of their predecessors of Eq.~(\ref{lowkit})
with the
properly
generalized two-by-two-matrix elements.

The parallels and consequences are obvious,
leading to the two-by-two matrix-continued-fraction (MCF, \cite{Nex}) recurrences
 \be
 F_k=\frac{1}{A_k-\sigma-B_k F_{k+1}C_{k+1}}\,,\ \ \ \
 k=N, N-1,\ldots,2 ,1\,
 \label{macf}
 \ee
where $F_{N+1}$ is the initial vanishing two-by-two matrix, and where,
in principle, $N \to \infty$.

Along these lines, the scalar scalar secular Eq.~(\ref{generi})
of section \ref{osinghovi}
becomes replaced by its two-by-two-matrix analogue
 \be
 \det F_1^{-1}(E_n)=0\,
 \ee
in a way which, occasionally, also found its numerical applications, i.a.,
in the context of anharmonic oscillators \cite{GG}.
\textcolor{black}{
Another
extension of the applicability of the present approach
using matrix continued fractions
could be also sought in the models of Ref.~\cite{[25b]}
for which some of the physical
Hamiltonian matrices appeared to be pentadiagonal.
Last but not least, a remark is to be made on the
slightly generalized
form of the fractions
needed for the study of certain higher-dimensional
systems.
In these cases (cf., e.g., \cite{multi}),
a natural limitation of applicability
can be seen to lie in a less obvious insight in the questions
of convergence
of the related generalized matrix continued fractions.}

\subsection{The case of Bose-Hubbard-like complex Hamiltonians $H$}

The Bose-Hubbard-like Hamiltonians $H^{[M,N]}$
of Eq.~(\ref{cotri})
are characterized by the two-sided asymptotic growth
(\ref{preas})
of their diagonal matrix elements.
Nevertheless, at any finite $M$ and $N$
they can still be read as just re-indexed forms of
finite non-Hermitian matrices of Eq.~(\ref{trufinkit}).
This means that once we turn attention to the two-sided asymptotic growth
(\ref{preas}) of their diagonals,
we can still replace Eq.~(\ref{matrix}) by its partitioned analogue
 \be
 {\mathbb H}=
  \left[ \begin {array}{ccc|c|ccc}
 \ddots&\ddots&\ddots&\vdots&&
    &\\
   \ddots&A_{-2}&B_{-2}&0&\vdots&
    &\\
   \ddots&C_{-1}&A_{-1}&B_{-1}&0&\ldots
    &\\
    \hline
  \ldots &0&C_0&A_0&B_0&0&\ldots
    \\
    \hline
    &\ldots
  &0& C_1&A_1&B_1&\ddots
   \\&& \vdots
 &0
 &C_2&A_2&\ddots
   \\&&
 &\vdots&\ddots&\ddots&\ddots
     \\
 \end {array} \right]\,.
 \label{bimatrix}
 \ee
This enables us to modify also factorization (\ref{alfinkit}) in terms of
the partitioned analogues of Eqs.~(\ref{levy}) and (\ref{pravy}). In this manner
one merely
replaces the lower-case
symbols (for the matrix elements of Hamiltonian $H$)
by the upper-case
symbols (representing the
two-by-two submatrices of
our present auxiliary matrix $\mathbb{H}$), i.e., one merely changes
$a_k \to A_k$, $b_k \to B_k$,
$c_{k+1} \to C_{k+1}$, etc.

The asymptotic- growth property (\ref{preas}) enables us
to consider the pair of two-by-two MCF recurrences
 \be
 F_{-j}=\frac{1}{A_{-j}-\sigma-C_{-j}F_{-j-1}B_{-j-1}}\,,\ \ \ \
 j=M, M-1,\ldots,2 ,1
 \label{biucf}
 \ee
(with, formally, vanishing initial $F_{-M-1}=0$) and
 \be
 F_k=\frac{1}{A_k-\sigma-B_kF_{k+1}C_{k+1}}\,,\ \ \ \
 k=N, N-1,\ldots,2 ,1
 \label{bilcf}
 \ee
(where we use $F_{N+1}=0$)
at any $M \leq \infty$ and $N \leq \infty$.
Ultimately we may
write down
also
the closed form of the Green's function
and we may expect that it remains well defined
in all of the generic non-terminating-continued-fraction cases.

\begin{lemma}. \label{lemmatwo}
The Green's function $G(z)$ associated with the Hermitized
Hamiltonian~(\ref{bimatrix})
can be defined by formula $G(z)=\det F_0(z)$ with
 \be
 F_0(z)=\left [{A_{0}-z-C_{0}F_{-1}(z)B_{-1}-B_0F_{1}(z)
 C_{1}}\right ]^{-1}\,,
 \label{umacf}
 \ee
i.e., in terms of the pair of the  two-by-two-matrix
continued fractions
defined by recurrences~(\ref{biucf}) and (\ref{bilcf}).
\end{lemma}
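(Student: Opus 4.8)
The plan is to run the proof of Lemma~\ref{lemmaone} essentially verbatim, but with every scalar promoted to a $2\times 2$ block: the scalar factorization (\ref{finkit}) is replaced by its block form (\ref{alfinkit}), each reciprocal $1/f_j$ becomes a matrix inverse $F_j^{-1}$, and the scalar coefficients are substituted according to $a_k \to A_k$, $b_k \to B_k$, $c_{k+1}\to C_{k+1}$, $f_j \to F_j$. First I would write down the block-bidiagonal outer factors $\mathbb{U}$ and $\mathbb{L}$ as the partitioned analogues of (\ref{levy}) and (\ref{pravy}), obtained by that same substitution with every scalar product replaced by the matrix product in the corresponding order. Multiplying $\mathbb{U}\,\mathbb{F}\,\mathbb{L}$ and matching block-by-block against (\ref{bimatrix}) then forces precisely the recurrences (\ref{biucf}) and (\ref{bilcf}) for $j\neq 0$, while the zeroth diagonal block returns the two-sided matching identity $F_0^{-1}=A_0-\sigma-C_0F_{-1}B_{-1}-B_0F_1C_1$, i.e.\ formula (\ref{umacf}).

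The second step is the determinant bookkeeping that upgrades the scalar secular equation. Because $\mathbb{U}$ and $\mathbb{L}$ are block-triangular with identity diagonal blocks, $\det\mathbb{U}=\det\mathbb{L}=1$, so that
\be
 \det(\mathbb{H}-\sigma)=\det\mathbb{F}=\prod_j \det F_j^{-1}\,.
\ee
Under the block version of the regularity assumption (\ref{regu}), namely that each $F_j^{-1}$ is invertible for $j\neq 0$ at generic $z$, every factor with $j\neq 0$ is nonzero, so the secular equation $\det(\mathbb{H}-\sigma)=0$ collapses to the single condition $\det F_0^{-1}(z)=0$. Setting $G(z)=\det F_0(z)=1/\det F_0^{-1}(z)$, the poles of $G$ coincide exactly with the zeros of $\det F_0^{-1}$, i.e.\ with the eigenvalues of $\mathbb{H}$ surviving after the regular off-zero blocks are divided out; among these the non-negative ones are the singular values $\sigma_n$ of $H$. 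This reproduces, at the block level, the passage from (\ref{fsecularum}) to (\ref{maxs}) in the proof of Lemma~\ref{lemmaone}, with the scalar $1/f_0$ now replaced by the $2\times2$ determinant $\det F_0^{-1}$.

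I expect the main obstacle to be twofold. The algebraic nuisance is non-commutativity: unlike the scalar case the matrix factors in the off-diagonal blocks and in the recurrences (\ref{biucf})--(\ref{bilcf}) may not be permuted, so the block matching must be checked on the correct side and one must verify that the products $C_0F_{-1}B_{-1}$ and $B_0F_1C_1$ enter (\ref{umacf}) in exactly this order. The genuinely analytic obstacle is the well-definedness and convergence of the two matrix continued fractions as $M,N\to\infty$: the recursions require the inverted blocks $A_{-j}-\sigma-C_{-j}F_{-j-1}B_{-j-1}$ and $A_k-\sigma-B_kF_{k+1}C_{k+1}$ to remain nonsingular, and their convergence in the doubly-infinite limit must be traced back to the two-sided asymptotic growth (\ref{preas}) of the diagonal entries. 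Since the statement only asserts the construction in the generic non-terminating case, I would establish the algebraic identity (\ref{umacf}) rigorously and treat convergence as a consequence of (\ref{preas}), in the same spirit as the scalar convergence underlying (\ref{Virendra}), rather than attempt a fully general analytic convergence theorem here.
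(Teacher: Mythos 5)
Your proposal is correct and follows essentially the same route as the paper: the paper's own proof of Lemma~\ref{lemmatwo} simply invokes the argument of Lemma~\ref{lemmaone} with the replacements $u_i\to U_i$, $v_j\to V_j$, $f_k\to F_k$, which is exactly your block-promoted factorization, matching, and collapse of the secular condition to the central block. Your additional care about non-commutativity, the determinant bookkeeping $\det(\mathbb{H}-\sigma)=\prod_j\det F_j^{-1}$, and the convergence caveat tied to (\ref{preas}) makes explicit what the paper leaves implicit, but it is the same proof.
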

\begin{proof}.
The idea of the proof is the same as in Lemma \ref{lemmaone}.
After the
partitioning
of the operators and after the obvious replacements
$u_i \to U_i$,
$v_j \to V_j$ and $f_k \to F_k$
of the lower-case matrix elements by
the upper-case two-by-two matrices, the algebra and
structure of factors
remains unchanged.
In particular, the
standard continued fractions
defined by recurrences (\ref{ucf}) and (\ref{lcf})
become replaced by their MCF analogues.
\end{proof}

\textcolor{black}{Here, it is worth adding that
in practice,
a key to the applicability of all of the similar
matrix-continued-fraction results and formulae
will always be model-dependent.
In particular, interested readers may compare the very slow
matrix-continued-fraction convergence in paper \cite{GG}
with its extremely
quick rate in the case of the symmetrically
anharmonioc oscillator of paper \cite{SAO}.
Fortunately,
the present subclass of the specific
matrix continued fractions
is characterized by the
sparse structure of their
two-by-two submatrix coefficients
(cf. Eq.~(\ref{[14]})).
For this reason one may expect that
the convergence
may be expected quick here.
Competitive with
the other existing methods for computing singular values.
}

\textcolor{black}{Naturally, the latter statement
is valid
up to some rare-parameters extremes
in which the use of the matrix
continued fractions encounters its natural
imitations. One of such
end-of-applicability scenarios in which the matrix
continued fractions start to diverge
is given an explicit ``benchmark''
numerical illustration
in recent preprint \cite{MCF}.
In the language of physics
the onset of the matrix
continued fraction divergence usually means
a phase transition
to the dynamical regime characterized, typically, by
continuous spectrum \cite{thresh}.
}

\textcolor{black}{After all, an explicit confirmation
of the expectation convergence
can also be found provided in the most recent update
of preprint \cite{MCF}.
A generic illustrative example is studied there
algebraically as well as
numerically.
Incidentally, one of the serendipitious byproducts of the
corresponding more abstract algebraic analysis
appears important not only for the proof and speed of the convergence
but also for a demonstration of a robust
nature of these properties with respect to random perturbations:
The reason of validity of such an implication
for sensitivity analysis is that
the proof of convergence is based
on the truly robust fixed-point property
of the asymptotic forms of mappings (\ref{biucf}) or (\ref{bilcf}).
}

\section{Discussion\label{cotos}}

During the methodically motivated recollection of the
ordinary differential bound-state Schr\"{o}dinger equations
in Appendix A we had to distinguish between the quantum systems living
on half-axis ${\mathbb R}^+$ and on the full real line
${\mathbb R}$.
We showed that in the latter case one has to require
the growth of the potential (i.e.,
after discretization, of the main diagonal of Hamiltonian~(\ref{dvanact}))
at both of the coordinate infinities (cf. Eq.~(\ref{confiteor}) or,
for complex potentials, Eq.~(\ref{reconfiteor})).

In our present paper such an observation has been
transferred to the family of complex
tridiagonal-matrix
quantum bound-state Hamiltonians of Eqs.~(\ref{cotri}) + (\ref{preas})
which we called, in a way inspired by
Appendix B, Bose-Hubbard-like Hamiltonians representing the
general open systems with the complex (but still discrete) energy
spectra $\{E_n\}$.


For the purposes of illustration we recalled the manifestly
non-Hermitian Bose-Hubbard model of Ref.~\cite{Uwe}, and we
emphasized that for this model, the spectrum of the energies only
remains real (i.e., calculable by many standard and more or less
routine numerical procedures) inside a certain specific domain
${\cal D}^{(0)}$ of unitarity-compatible parameters.
\textcolor{black}{Naturally, such a choice could have been
complemented by the study of many other theoretical models which
could be followed by a more or less immediate confirmation by
experiments. Indeed, there exists a large number of physical systems
which attracted attention to the implications of non-Hermiticity in
physical applications. {\it Pars pro toto}, let us just start the
list by mentioning the thirty years old study \cite{[37]} of the
coupling of bound states to the open decay channels, the equally old
paper \cite{[44]} on the virtual-localized transitions, or just a
marginally younger paper \cite{[22]} dealing with the photonic
crystals. What followed a bit later were the descriptions of the
decays of atomic nuclei \cite{[33]}.}

\textcolor{black}{At the beginning of the new millennium we witnessed
a growth of interest in the anomalous forms of the propagation of
light \cite{[25]} and in the various new forms of the simulation of
the physical phenomena using non-Hermitian mathematics. Thus, the
family of the new examples of the real systems appeared to involve
some anomalous properties of the electronic circuits \cite{[23]} as
well as very real phenomenon of the reversal of the magnetic poles
of the Earth \cite{[38]}. Experiments have been performed using
optical microcavities \cite{[41]}. In parallel, multiple new
theoretical concepts emerged involving the vibrational surface modes
\cite{[42]} or the so called, Rabi splitting in semiconductors
\cite{[26]}. }

\textcolor{black}{Non-Hermiticity-related ideas also emerged in the
context of the study of chaos \cite{xxx,xxxa,xxxb}, especially via
its experimental simulations  in microwave billiards \cite{[28]}.
The other related analyses led to the constructions of very
innovative realistic pictures of the systems of quantum dots
\cite{[40]}, of the emergence of the so called Feshbach resonances
\cite{[46]}, etc.}

\textcolor{black}{ In fact, there is hardly any chance of making this
list complete. Thus, last but not least, let us close our hardly
exhaustive sample of the early stages of developments of the
non-Hermitian science by the last characteristic references to the
use of non-Hermiticity and continued fractions
in information processing \cite{[8]},
and to the description of the
phenomenon of the so called paramagnetic electron resonance
\cite{[32a]}.}

In our recent paper~\cite{zobecUwe} we revealed that
\textcolor{black}{in some of the models in the latter list, their
constructive analysis becomes not too easy, especially} when one
turns attention to the exterior of ${\cal D}^{(0)}$ or to the family
of the Bose-Hubbard-like models in which the spectra
(\textcolor{black}{i.e., in most cases,} the
energies of resonances) are complex. In the present continuation of
paper~\cite{zobecUwe} we managed to clarify the possibility of the
identification of these complex energies with the complex poles of a
properly generalized Green's function $G(z)$.

In this setting our present first key result is that
such a  generalization of the conventional Green's function
must be constructed as a function of a {\em pair\,}
of two independent analytic continued fractions.
We complemented this observation by the remark that
the conventional numerical localization
of the poles (which are, in general, complex)
may become almost prohibitively difficult
in practice.
We, therefore, recommended that
in such a situation, even the
search for an
incomplete information about the system
might make sense.

For the sake of definiteness we decided to
analyze just the possibility of
the search
for the mere singular values
$\sigma_n$ of $H$ \cite{SV}.
For this purpose,
we had to use
the method
based on the Hermitization $H \to {\mathbb H}$
as
developed by
Pushnitski and \v{S}tampach in \cite{PS1}.

The first step in this direction has been made
in our recent preprint \cite{MCF}.
We considered there just the case of the ordinary tridiagonal-matrix Hamiltonians
of Eq.~(\ref{SEfinkit}) (i.e., formally, with the choice of a fixed integer $M = -1$
in our $H^{[M,N]}$ of Eq.~(\ref{cotri})).
The key technical ingredient was that
once we were interested in the most common quantum models
with discrete spectra,.
we decided to
replace the bounded-operator assumption of paper \cite{PS1}
by the one-sided-growth requirement
 \be
 \lim_{N\to \infty}|a_{N}|= \infty\,
 \label{kupreas}
 \ee
which still enabled us to construct the Green's function
in terms of a single continued fraction.
Incidentally, this was a constraint which enabled us to
deliver, in particular, the proof of the MCF convergence.

In our present continuation of these efforts, our attention
has been extended
to the ``doubly-infinite-matrix'' Bose-Hubbard-like systems
characterized by the
replacement of Eq.~(\ref{kupreas}) by the
doublet of asymptotical-behavior constraints (\ref{preas}).
Among all of the eligible versions
of such quantum systems and models, the one with equal cut-offs
$M=N$ might be given a particular attention because its
matrix structure
would lead us back to the
systems with
${\cal PT}-$symmetry \cite{book}
and, in particular, to the
specific model of Graefe et al \cite{Uwe}.
Incidentally, we only paid attention here to the latter systems
in the case of the even number of
$2N$ bosons, with
their states described
by
$(2N+1)-$dimensional matrix Hamiltonian
of the form of Eq.~(\ref{cotri}).
Now, it is time to add that
it would be straightforward
to readapt the formulae to cover also the
systems with the odd numbers of bosons.
This task is easy and can be left to the readers as an exercise.

There exist
parallels between
the conventional continued-fraction representation
of Green's functions as sampled by Eq.~(\ref{Virendra})
and their present two-continued-fraction representation as sampled by
Eq.~(\ref{umacf}).
The analogies can be illustrated and traced back to the
single-particle models
of Appendix A.
Indeed, Schr\"{o}dinger Eqs.~(\ref{SEloc})
or (\ref{SEdis})
living on the whole real line
(with $\psi_n(x)\in L^2(\mathbb{R})$
in (\ref{SEloc}) or with
$\psi_n(x_k)\in \ell^2(\mathbb{Z})$ in (\ref{SEdis}))
can be compared with the models living
on the half-axis
(with $\psi_n(x)\in L^2(\mathbb{R}^+)$
in (\ref{SEloc}) or with
$\psi_n(x_k)\in \ell^2(\mathbb{N})$ in (\ref{SEdis})).

One comes to the conclusion that
in the former, full-line case, the confining
potential
must be asymptotically unbounded
at both of the ends of the real line
(cf. Eq.~(\ref{confiteor})).
The situation is different
in the half-axis scenario where
one usually deals just with
the single asymptotic condition of
growth of $V(x)$ at positive
$x\gg 1$ or $x_k\gg 1$.
This means that in
the
discrete analogue (\ref{cotri}) + (\ref{preas}) of the
differential
full-line bound-state problem (\ref{SEloc}) + (\ref{confiteor}),
our present amended Green's function $G(z)$
entering secular Eq.~(\ref{maxs})
can still be expected convergent because it is
defined in terms of the {\em two independent} continued fractions.

Naturally, an entirely analogous conclusion can be also drawn
in the context
of the singular values and matrix continued fractions
(cf., in particular, Eq.~(\ref{umacf})).

\section{Summary\label{summary}}

The
recent growth of interest in
the use of non-Hermitian
operators in quantum mechanics
opened also new horizons in the
study of quantum phase transitions.
With a persuasive illustration
provided by the Bose-Hubbard
Hamiltonian of Graefe et al \cite{Uwe}
it has been clarified that
the process of a typical
phase transition finds an appropriate mathematical
representation in the concept of exceptional point \cite{Heiss}.

The kinematics as well as
dynamics of this process remained to be a challenge.
Even using certain exactly solvable models (cf., e.g., \cite{passage},
with further references)
the analysis and description of quantum dynamics
in a vicinity of an exceptional point
remained far from straightforward,
requiring the formalism of non-Hermitian
quantum mechanics in its interaction-picture
representation \cite{NIP}.

In the light
of the lack of efficient description methods
one of them
has been proposed in our present paper.
We restricted our attention
to the exceptional-point-supporting class of phenomenological Hamiltonians
available in the next-to-elementary tridiagonal-matrix representation.
We emphasized that in the framework of conventional theory
one of the efficient construction techniques
is known to be the representation of Green's functions
using analytic continued fractions.

We proposed a generalization of the approach.
We claimed that once the Hamiltonian is admitted non-Hermitian, and once
its spectrum is admitted complex, one can still
construct the Green's functions using just an appropriate
upgrade of the recurrent continued-fraction-expansion
philosophy and mathematics.

Several related open questions have been solved.
Firstly, we managed to explain an important
formal difference between the standard
matrix Hamiltonians (as sampled, in our paper, by
the reference to the paper by Singh et al \cite{Singh})
and the non-standard
matrix Hamiltonians (as sampled by
the generalized Bose-Hubbard model of paper \cite{zobecUwe}).

We revealed that such a difference finds its
comparatively elementary explanation in a comparison of
the ordinary differential Schr\"{o}dinger equations
living on the half-axis of
coordinates $r \in (0,\infty)$ with the ones
living on the
full real line.
The one-parametric growth $N \to \infty$ of the dimension
of a conventional matrix $H$ (say, of Eq.~(\ref{SEfinkit}))
has to be replaced by the double limiting transition
with $M \to \infty$ and  $N \to \infty$ as sampled, e.g.,
in Eq.~(\ref{cotri}).

Our efforts were motivated by the increase of
difficulties accompanying the recent turn of interest to
the
quantum systems with non-Hermitian Hamiltonians.
Indeed, the localization of
the bound-state energies treated as the
poles of the continued-fraction Green's functions
only remained comparatively easy
under a guarantee that these values
must be real,
i.e.,
under a guarantee that the
Hamiltonian $H$ in question
remains Hermitian, or, at worst, quasi-Hermitian \cite{Geyer}.

In technical terms, we felt inspired by paper \cite{PS1}
in which the authors
recommended
to avoid the search for complex energies
and to reduce the problem to the mere computation of the
singular values $\sigma_n$.
Using this guidance we
also reinterpreted the singular values $\sigma_n$
as eigenvalues of a Hermitized Hamiltonian
{\it alias\,} quasi-Hamiltonian $\widetilde{\mathbb{H}}$
or, equivalently, of its
untilded isospectal avatar ${\mathbb{H}}$
possessing a block-tridiagonal
matrix structure.

In the natural ultimate step we pointed out
that the block-tridiagonality of ${\mathbb{H}}$ enables one to
construct also the related Green's function. We
only have to emphasize that
such a function has
a sophisticated form
constructed in terms of a
doublet of matrix continued fractions.

%
%
%
%

\newpage

\newpage

\section*{Appendix A. Discrete version of Schr\"{o}dinger equation
as an elementary tridiagonal-matrix example}

One of the
best known
examples of
a tridiagonal-matrix (\ref{cotri})
emerges, in a purely numerical context, when
one tries to construct bound states using Schr\"{o}dinger equation
 \be
 -\frac{d^2}{dx^2}\,\psi_n(x)+
  V(x)\,\psi(x) =E_n\,\psi_n(x)\,,
 \ \ \ \ \ \ \
 \psi_n(x)\in L^2(\mathbb{R})\,
 \label{SEloc}
 \ee
with a real,
local and asymptotically growing confining potential
such that
 \be
 \lim_{x \to - \infty} V(x) = +\infty\,,\ \ \ \ \
 \lim_{x \to + \infty} V(x) = +\infty\,.
 \label{confiteor}
 \ee
The ordinary
differential
Eq.~(\ref{SEloc}) is
replaced by its discrete version
living on a finite equidistant lattice $\{x_k\}$,
 \be
 -\frac{\psi_n(x_{k-1})-2\,\psi_n(x_k)+\psi_n(x_{k+1})}{h^2}+V(x_k)\,
 \psi_n(x_k)
 =E_n\,\psi_n(x_k)\, ,
 \ \ \ \ \ \ \
 \psi(x_k)\in \ell^2(\mathbb{N})\,.
 \label{SEdis}
 \ee
After we subtract $2\,\psi_n(x_k)/h^2$
from both sides of the equation,
and after we absorb such a shift in a rescaled energy,
our
Hamiltonian
acquires the elementary tridiagonal-matrix form
 \be
 {H}=
 \left (
 \begin{array}{ccccc}
 V(x_{-M})&-1/h^2&&&\\
 -1/h^2&V(x_{-M+1})&-1/h^2&&\\
  &-1/h^2& \ddots & \ddots&\\
  & &\ddots&V(x_{N-1})&-1/h^2 \\
 &&&-1/h^2&V(x_{N})
 \ea
 \right )\,
 \label{dvanact}
 \ee
characterized by the special case
(\ref{confiteor}) of the above-mentioned
asymptotic-growth property
of the diagonal matrix elements~(\ref{preas}).

The off-diagonal
constants $b_j=c_{j+1}=-1/h^2$
reflect the equidistant-lattice nature of
kinetic energy.
For the other,
non-equidistant-lattice forms
of the sets of coordinates,
one only has to use a suitable modification
of the latter constants (see, e.g., \cite{hussin}).
Alternatively,
non-tridiagonal-matrix forms
of the kinetic energy
do also exist and
can be found described and used, e.g.,
in \cite{exact}.

\begin{figure}[h]                    
\begin{center}                         
\epsfig{file=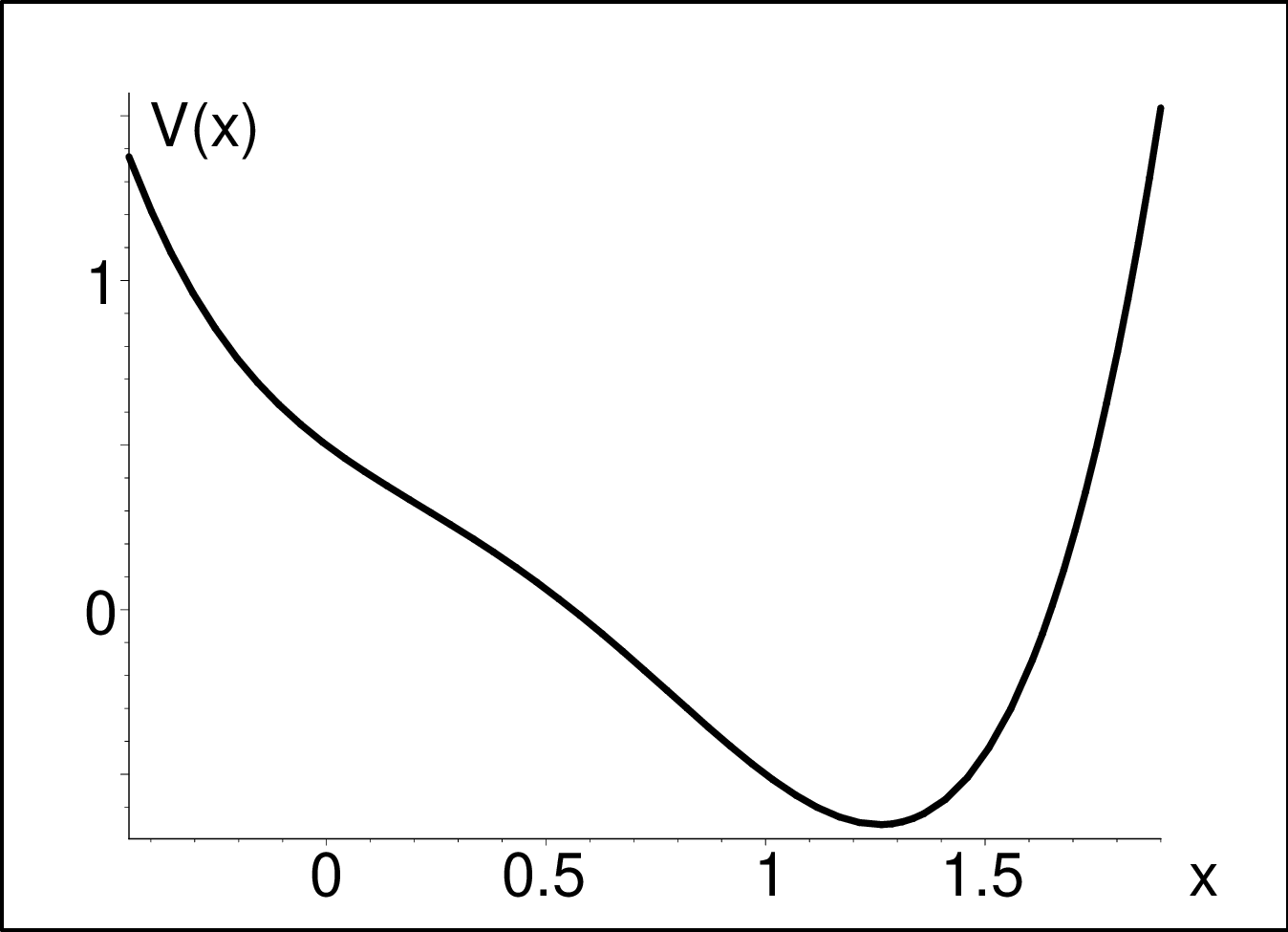,angle=270,width=0.35\textwidth}
\end{center}    
\caption{The shape of  potential V(x)
of Eq.~(\ref{Vbus}) (we choose $j=g=1$ in
Theorem Nr. 5 of Ref. \cite{BG}).
 \label{globe}}
\end{figure}

\begin{figure}[h]                    
\begin{center}                         
\epsfig{file=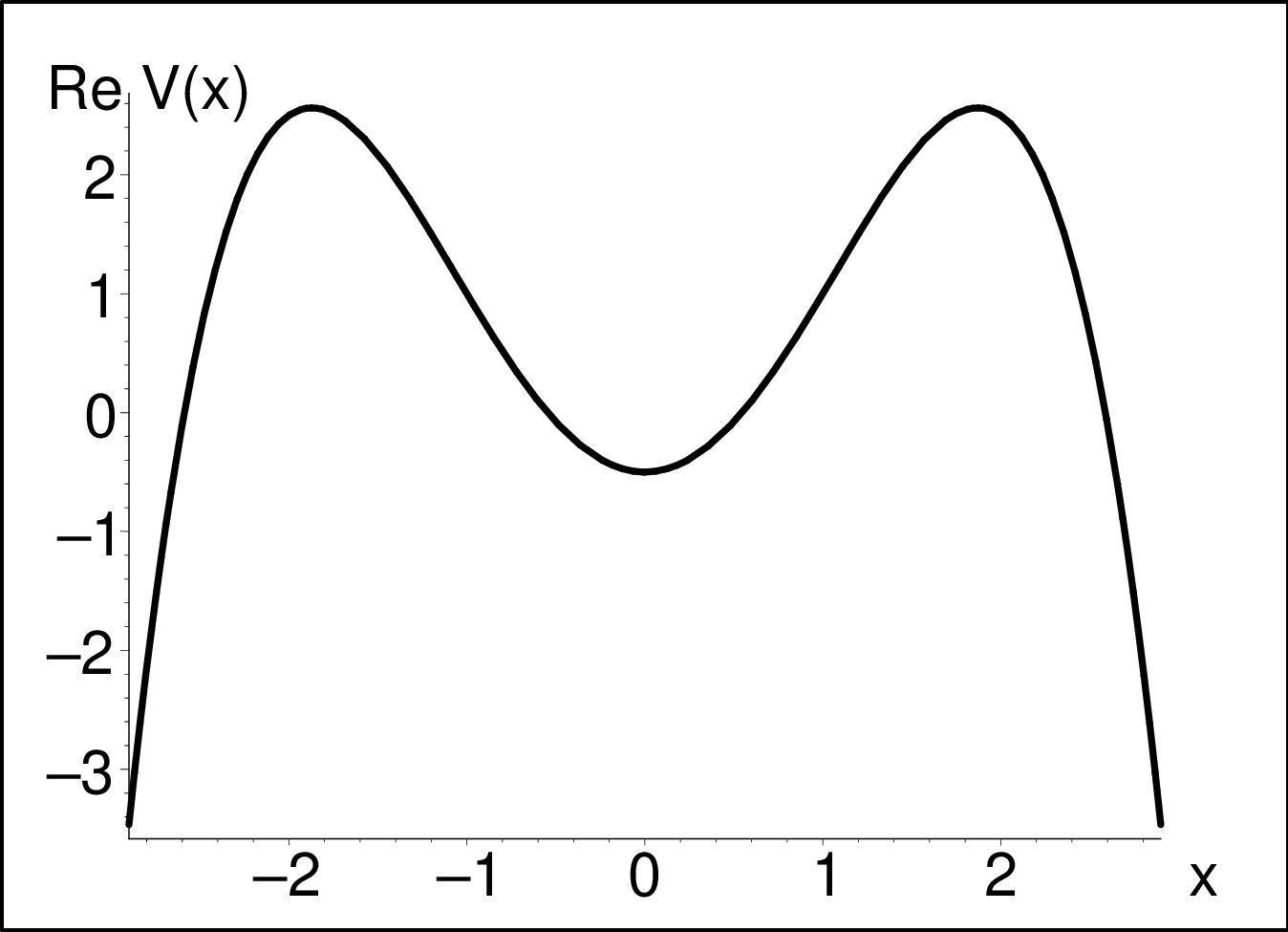,angle=270,width=0.35\textwidth}
\end{center}    
\caption{The shape of the real part of potential W(x)
of Eq.~(\ref{Wbus}) at $\eta=1$.
 \label{rglobe}}
\end{figure}

\begin{figure}[h]                    
\begin{center}                         
\epsfig{file=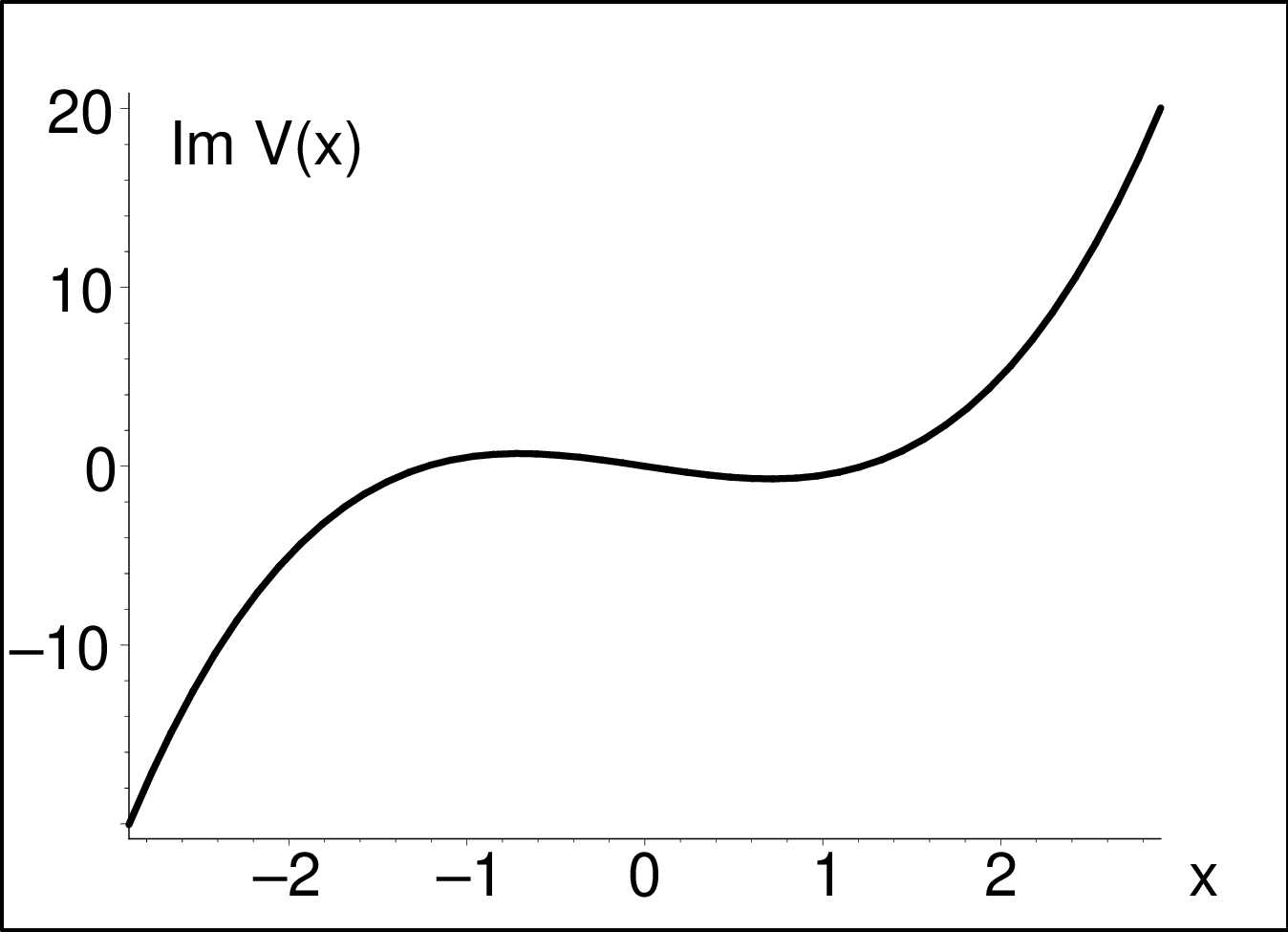,angle=270,width=0.35\textwidth}
\end{center}    
\caption{The shape of the imaginary part of potential W(x)
of Eq.~(\ref{Wbus}) at $\eta=1$.
 \label{iglobe}}
\end{figure}

A return
to the continuous case could be
achieved via limiting transition of $M \to \infty$, $N \to \infty$,
and $h \to 0$.
The dynamics then remains
controlled by the diagonal matrix elements
$a_j = V(x_j)$ so that
under the conventional assumption of their reality,
conditions of Eq.~(\ref{confiteor})
can be perceived
as a special case of the above-mentioned
matrix asymptotic growth
(\ref{preas}).

In comparison, constraint
(\ref{preas})
involving the absolute values
of the diagonal matrix elements of $H$
was introduced as a purely mathematical
guarantee of convergence of
the continued-fraction recurrences (\ref{ucf}) and (\ref{lcf}).
Nevertheless, once we also modify
conditions
(\ref{confiteor}) to read
 \be
 \lim_{x \to - \infty} |V(x)| = +\infty\,,\ \ \ \ \
 \lim_{x \to + \infty} |V(x)| = +\infty\,
 \label{reconfiteor}
 \ee
we may also admit
complex
potentials $V(x) \in \mathbb{C}$
in Eqs.~(\ref{SEloc}) and (\ref{SEdis}) (cf. \cite{book}).

For an explicit illustration of such an extension of the conventional
postulates of
quantum theory
let us recall
the toy model of paper \cite{BG}.
Buslaev with Grecchi proved there that the entirely conventional
Schr\"{o}dinger Eq.~(\ref{SEloc})
with a real and asymptotically growing potential
 \be
 V^{(BG)}(x)={x}^{2} \left( x-1 \right) ^{2}-x+1/2
 \label{Vbus}
 \ee
yielding the real and discrete bound state spectrum
(see the shape of this potential in Figure \ref{globe})
is isospectral with
all of its $\eta-$numbered non-Hermitian
alternatives with
 \be
V^{(\eta)}(x)=-\frac{1}{4}\,(x-i\,\eta)^4+\frac{1}{4}
(x-i\,\eta)^2\,,\ \ \ \ \ \eta>0\,.
 \label{Wbus}
 \ee
Asymptotically, the absolute values
of the latter potentials grow, $|V^{(\eta)}(x)| = x^4/4 + \ldots$,
but they are dominated by the real part which
is asymptotically decreasing to minus infinity
(see its shape in Figure \ref{rglobe}).
An analogous, albeit one-sided,
anomalous asymptotic behavior characterizes also
the asymptotically subdominant
imaginary part of the potential
(cf. Figure \ref{iglobe}).

A parallel now emerges between the non-Hermitian discrete
Schr\"{o}dinger equations with interactions $V^{(\eta)}(x)$
and the
Bose-Hubbard matrix models with
Hamiltonians $H^{(BH)}({\rm i}\gamma,v,c)$.
Indeed, once we fix $v=1$ and
choose $c=0$, we reveal that
the two $|x| \to \infty$ asymptotics of Figure \ref{iglobe}
find their
immediate respective
analogues in the $M \gg 1$ and $N \gg 1$
asymptotics
of
the purely imaginary diagonal matrix elements of matrices
$H^{(N)}_{(CBH)}(\gamma)$ or
$H^{(N)}_{(non-BH)}(\gamma)$
of Appendix~ B.

\newpage

\section*{Appendix B. Non-Hermitian Bose-Hubbard model as a methodical guide}

During the study of the
analytically continued version $H^{(BH)}({\rm i}\gamma,v,c)$
of the conventional
Bose-Hubbard Hamiltonian (\ref{Ham1}),
the authors of paper \cite{Uwe} managed to simplify several
less essential technicalities.
One of the methodically most important simplifications
resulted from the choice of a
vanishing strength $c=0$ of interaction
between particles.

In the framework of
the resulting, still fairly realistic second-quantized model
(offering, i.a., a quantitative insight in the
mechanism of the so called Bose-Einstein
bosonic condensation)
the authors also fixed the control of tunneling
and set $v=1$.
They
pointed out that in the system
the number of particles ${\cal N}$
is conserved so that
its choice reduces the Hamiltonian
to a comparatively elementary
$K$ by $K$ complex matrix $H^{(K)}_{(CBH)}(\gamma)$
of the tridiagonal-matrix form (\ref{cotri})
with  $K={\cal N}+1$
and property (\ref{preas})
(see formula Nr. 8 in {\it loc. cit.}).

The authors made an ample use of
the exact solvability of the
model with $c=0$. Indeed, at $K=2$
they had to deal then with the most elementary Hamiltonian matrix
\be
 H^{(2)}_{(CBH)}(\gamma)=
 \left[ \begin {array}{cc} -i{\it \gamma}&1
 \\\noalign{\medskip}1&i{\it
 \gamma}
 \end {array} \right]\,
 \label{dopp2}
 \ee
generating the
pair of
bound states with energies $E_\pm = \pm \sqrt{1-\gamma^2}$ and with
the two eligible non-Hermitian exceptional-point (EP)
degeneracies at $\gamma^{(EP)}_\pm=\pm 1$. Similarly, at $K=3$
one has
 \be
H^{(3)}_{(CBH)}(\gamma)=\left[ \begin {array}{ccc} -2\,i\gamma&
\sqrt{2}&0\\\noalign{\medskip}\sqrt{2}&0&
\sqrt{2}\\\noalign{\medskip}0&\sqrt{2}&2\,i\gamma\end {array}
\right]\,
  \label{3wg}
 \ee
and one obtains
$E_0=0$ and $E_\pm = \pm 2\,\sqrt{1-\gamma^2}$ while still
$\gamma^{(EP)}_\pm =\pm 1$, etc.
%

Marginally, it is worth adding that
the first nontrivial ``single-particle'' item  (\ref{dopp2})
of the $K-$numbered sequence
of the simplified Hamiltonian matrices
is in fact the most popular non-Hermitian matrix
used, for illustration purposes,
in many recent papers on
the so called pseudo-Hermitian \cite{ali} {\it alias\,}
${\cal PT}-$symmetric \cite{making} reformulations of
quantum mechanics.

For our present methodical and illustrative purposes we
decided to restrict attention,
for the sake of transparency of the notation, to the
subset of the latter family of toy models comprising the
even numbers of particles, with the next, four-particle item
 \be
 H^{(5)}_{(CBH)}(\gamma)=\left[
 \begin {array}{ccccc} -4\,i\gamma&2&0&0&0
 \\
 \noalign{\medskip}2&-2\,i\gamma&\sqrt {6}&0&0
 \\
 \noalign{\medskip}0&\sqrt{6}&0&\sqrt {6}&0\\\noalign{\medskip}0&0&
 \sqrt {6}&2\,i\gamma&2\\\noalign{\medskip}0&0&0 &2&4 \,i\gamma\end
 {array} \right]\,
 \label{petpa}
 \ee
etc. This means that we will work just with
$M=N \leq \infty$ and $K={\cal N}+1 = 2N+1$
in our present general Hamiltonian-matrix
ansatz (\ref{cotri}) + (\ref{preas}).

In the context of physics,
the authors of paper \cite{Uwe}
saw
a key phenomenological
appeal of models $H^{(K)}_{(CBH)}(\gamma)$
in the
existence of the truly remarkable
exceptional points $\gamma^{EP}$ of
the $K-$th (i.e., of the maximal) order.
In this spirit, these models may be found generalized, e.g.,
in \cite{mujBH} where, typically, the EP-supporting complex-symmetric
$K=5$ Hamiltonian matrix~(\ref{petpa})
found its complex-symmetric Bose-Hubbard-like alternative
 \be
 H^{(5)}_{(non-BH)}(\gamma)=\left[
 \begin {array}{ccccc} -4\,i\gamma&8&0&0&0
 \\
 \noalign{\medskip}8&-2\,i\gamma&i\sqrt {54}&0&0
 \\
 \noalign{\medskip}0&i\sqrt{54}&0&i\sqrt {54}&0\\\noalign{\medskip}0&0&
 i\sqrt {54}&2\,i\gamma&8\\\noalign{\medskip}0&0&0 &8&4 \,i\gamma\end
 {array} \right]\,
 \label{zpetpa}
 \ee
etc.

The alternative models
lost most of the original Bose-Hubbard symmetries
after such a generalization.
One had to diagonalize them numerically \cite{BH18}.
Moreover, even near the EP singularities,
the EP-unfolding patterns appeared to
resist a perturbation-approximation treatment.
Indirectly, the emergence of all of these
difficulties  also motivated our present
continued-fraction (i.e., semi-analytic) approach and
constructions of the resolvents and of the Green's functions.

Naturally, our much more immediate motivation originated in physics.
The source of appeal
of the study of the behavior of the
open quantum systems near their EP
dynamical extremes can be seen
in its deeply innovative role in
the
study of
the theoretical aspects of the
quantum phase transitions (cf. also \cite{passage})
as well as
in
its possible experimental relevance.
Both of these aspects of the
process of unfolding of the EP degeneracies
has been also properly emphasized in the dedicated papers like \cite{Uwe} or
\cite{mujBH},
etc.

\newpage

\end{document}